\def\BibTeX{{\rm B\kern-.05em{\sc i\kern-.025em b}\kern-.08em
    T\kern-.1667em\lower.7ex\hbox{E}\kern-.125emX}}
\newtheorem{thm}{Theorem}
\newtheorem{lem}[thm]{Lemma}
\def\delequal{\mathrel{\ensurestackMath{\stackon[1pt]{=}{\scriptstyle\Delta}}}}
\newenvironment{proof}{\textit{Proof.}}
\pgfplotsset{compat=1.18}
\newacronym{iot}{IoT}{Internet of things}
\newacronym{bs}{BS}{base station}
\newacronym{miso}{MISO}{multiple-input single-output}
\newacronym{simo}{SIMO}{single-input multiple-output}
\newacronym{mimo}{MIMO}{multiple-input multiple-output}
\newacronym{siso}{SISO}{single-input single-output}
\newacronym{bps}{bps}{bits per second}
\newacronym{mrt}{MRT}{maximum-ratio transmission}
\newacronym{los}{LoS}{line-of-sight}
\newacronym{awgn}{AWGN}{additive white Gaussian noise}
\newacronym{cmos}{CMOS}{complementary metal-oxide semiconductor}
\newacronym{ecc}{ECC}{error control coding}
\newacronym{adc}{ADC}{analog-to-digital converter}
\newacronym{dac}{DAC}{digital-to-analog converter}
\newacronym{dsp}{DSP}{digital signal processing}
\newacronym{lpf}{LPF}{low-pass filter}
\newacronym{bpf}{BPF}{band-pass filter}
\newacronym{lo}{LO}{local oscillator}
\newacronym{rf}{RF}{radio frequency}
\newacronym{dc}{DC}{direct current}
\newacronym{if}{IF}{intermediate frequency}
\newacronym{pa}{PA}{power amplifier}
\newacronym{wpt}{WPT}{wireless power transfer}
\newacronym{snr}{SNR}{signal-to-noise ratio}
\newacronym{ber}{BER}{bit error rate}
\newacronym{per}{PER}{packet error rate}
\newacronym{qam}{QAM}{quadrature amplitude modulation}
\newacronym{bpsk}{BPSK}{binary phase-shift keying }
\newacronym{qpsk}{QPSK}{quadrature phase-shift keying}
\newacronym{oqpsk}{OQPSK}{offset quadrature phase-shift keying}
\newacronym{psd}{PSD}{power spectral density}
\newacronym{eirp}{EIRP}{effective isotropic radiated power}
\newacronym{ism}{ISM}{industrial, scientific and medical}
\newacronym{fcc}{FCC}{Federal Communications Commission}
\newacronym{lp}{LP}{low-power}
\newacronym{mp}{MP}{medium-power}
\newacronym{hp}{HP}{high-power}
\newacronym{cdf}{CDF}{cumulative distribution function}
\newacronym{nlos}{NLOS}{non-line-of-sight}
\newacronym{ppp}{PPP}{Poisson point process}
\newacronym{pdf}{PDF}{probability density function}
\newacronym{ris}{RIS}{reconfigurable intelligent surface}
\newacronym{csi}{CSI}{channel state information}
\newacronym{af}{AF}{amplify and forward}
\newacronym{eh}{EH}{energy harvesting}
\newacronym{em}{EM}{electromagnetic}
\newacronym{ap}{AP}{access point}
\newacronym{nim}{NIM}{negative-index material}
\newacronym{srr}{SRR}{split ring resonator}
\newacronym{iid}{i.i.d.}{independent and identically distributed}
\newacronym{ml}{ML}{maximum likelihood}
\newacronym{lse}{LSE}{least square error}
\newacronym{mse}{MSE}{mean squared error}
\newacronym{rmse}{RMSE}{root mean squared error}
\newacronym{svd}{SVD}{singular value decomposition}\newacronym{mmse}{MMSE}{minimum mean square error}
\newacronym{swipt}{SWIPT}{simultaneous wireless information and power transfer}
\begin{document}
\bstctlcite{IEEEexample:BSTcontrol}


\title{Amplitude-Based Sequential Optimization of Energy Harvesting with Reconfigurable Intelligent Surfaces\\
\thanks{This paper was supported by Digital Futures.}}


\author{\IEEEauthorblockN{Morteza~Tavana$^{*}$, Meysam~Masoudi$^{\dagger}$, and Emil~Bj\"ornson$^{*}$}\\\vspace{-2mm}
\IEEEauthorblockA{$^{*}$School of Electrical Engineering and Computer Science, KTH Royal Institute of Technology, Stockholm, Sweden\\
$^{\dagger}$Ericsson, Global AI Accelerator (GAIA) unit, Sweden\\
E-mail: $^{*}$\{morteza2, ~emilbjo\}@kth.se, $^{\dagger}$meysam.masoudi@ericsson.com}}

\maketitle

\begin{abstract}
Reconfigurable Intelligent Surfaces (RISs) have gained immense popularity in recent years because of their ability to improve wireless coverage and their flexibility to adapt to the changes in a wireless environment. These advantages are due to RISs’ ability to control and manipulate radio frequency (RF) wave propagation. RISs may be deployed in inaccessible locations where it is difficult or expensive to connect to the power grid. Energy harvesting can enable the RIS to self-sustain its operations without relying on external power sources. In this paper, we consider the problem of energy harvesting for RISs in the absence of coordination with the ambient RF source. We consider both direct and indirect energy harvesting scenarios and show that the same mathematical model applies to them. We propose a sequential phase-alignment algorithm that maximizes the received power based on only power measurements. We prove the convergence of the proposed algorithm to the optimal value {\color{black} under specific circumstances}. Our simulation results show that the proposed algorithm converges to the optimal solution in a few iterations and outperforms the random phase update method in terms of the number of required measurements.
\end{abstract}

\begin{IEEEkeywords}
Energy harvesting, reconfigurable intelligent surface, phased array, zero-energy devices.\vspace{0mm}
\end{IEEEkeywords}

\IEEEpeerreviewmaketitle
\vspace{-1mm}
\section{Introduction}

{\color{black} Future wireless networks should provide seamless connectivity for the rapidly growing number of devices and services. If wireless networks are implemented in the same manner as before, the energy consumption would keep increasing dramatically with the traffic volume. From both carbon footprint and energy consumption perspectives, it is a necessity to develop energy-saving techniques that can be implemented in the network nodes including low-power and zero-energy devices \cite{MTavana_IoTJ_2021, MTavana_2022_ICC, Mtavana_VTCW_2022}.}

Traditional wireless networks  have no control over the radio propagation environment.
{\color{black} Providing connectivity for regions with low \gls*{snr} comes at cost of deploying more sophisticated transmission schemes, more radio resources such as antennas and spectrum and consequently consuming more energy} \cite{goldsmith2005wireless, 9140329}.

With the emergence of the \glspl*{ris}, several limiting factors associated with the propagation environment can be eliminated. A \gls*{ris}  can manipulate the propagation environment to increase the signal strength in the desired direction and guide the electromagnetic waves toward the receiver via engineered reflections \cite{9048622,6206517,9721205,9206044,wu2021intelligent}. The \gls*{ris} is primarily envisioned for providing coverage for regions that are blocked by objects\cite{9140329}. 
For instance, a \gls*{ris} can be deployed in a city with dense buildings and poor \gls*{los} conditions, or it can also be deployed in homes, where the walls obstruct the signal path.

In the presence of a wired power supply, \gls*{ris} may not have clear benefits compared to relays and small cells, and the respective advantages are debatable. However, if the \gls*{ris} is self-sustainable, it opens up new possibilities for deployment with no competition. This is where energy harvesting comes into play. By harvesting energy from \gls*{rf} signals that are already present in the environment, the \gls*{ris} can operate without relying on external power sources \cite{ntontin2022wireless}. This is particularly beneficial in situations where the \gls*{ris} is deployed in remote or inaccessible locations where it can be difficult or expensive to provide a continuous power source. In \cite{8741198}, the authors developed a model for the \gls*{ris} power consumption that is based on the number of elements and their phase resolution. Higher phase resolution in the \gls*{ris} increases the complexity and the power consumption \cite{7370753}.

The study \cite{9214497} considers a hybrid-relaying scheme empowered by a self-sustainable \gls*{ris} to simultaneously improve the downlink energy transmission and uplink information transmission. The authors proposed time-switching and power-splitting schemes for \gls*{ris} operation. The paper \cite{9895266} proposes a novel framework for \gls*{wpt} system using a \gls*{ris} to improve power transfer efficiency. The proposed framework employs independent beamforming to replace conventional joint beamforming, which results in higher efficiency.

The state-of-the-art techniques consider perfect \gls*{csi} at the \gls*{ris}, which is obtained by coordination between the transmitters and the \gls*{ris}. The existing solutions require extra hardware (i.e., multiple \gls*{rf} receiver to measure both amplitude and phase to obtain \gls*{csi}) and signaling, which in turn increases the energy consumption and cost of the \gls*{ris}. 

In this paper, we consider a different scenario, where the \gls*{ris} must configure itself without \gls*{rf} receivers. The proposed method makes power measurements in the energy harvesting units and uses them to maximize the harvesting power. 
\begin{figure}[t]
    \centering
    \hspace{-15mm}
    \scalebox{0.65}{\definecolor{mycolor1}{rgb}{0.85,0.325,0.098}
\definecolor{mycolor2}{rgb}{0.00000,0.44700,0.74100}
\definecolor{mycolor3}{rgb}{0.00000,0.49804,0.00000}
\definecolor{apricot}{rgb}{0.98, 0.81, 0.69}
\definecolor{copper}{rgb}{0.72, 0.45, 0.2}
\definecolor{babyblue}{rgb}{0.54, 0.81, 0.94}
\definecolor{gray}{rgb}{0.5, 0.5, 0.5}
\usetikzlibrary{decorations.pathreplacing}

\tikzset{radiation/.style={{decorate,decoration={expanding waves,angle=90,segment length=4pt}}},
         relay/.pic={
        code={\tikzset{scale=5/10}
            \draw[semithick] (0,0) -- (1,4);
            \draw[semithick] (3,0) -- (2,4);
            \draw[semithick] (0,0) arc (180:0:1.5 and -0.5) node[above, midway]{#1};
            \node[inner sep=4pt] (circ) at (1.5,5.5) {};
            \draw[semithick] (1.5,5.5) circle(8pt);
            \draw[semithick] (1.5,5.5cm-8pt) -- (1.5,4);
            \draw[semithick] (1.5,4) ellipse (0.5 and 0.166);
            \draw[semithick,radiation,decoration={angle=45}] (1.5cm+8pt,5.5) -- +(0:2);
            \draw[semithick,radiation,decoration={angle=45}] (1.5cm-8pt,5.5) -- +(180:2);
  }}
}

\begin{tikzpicture}[cross/.style={path picture={ 
  \draw[black]
(path picture bounding box.south east) -- (path picture bounding box.north west) (path picture bounding box.south west) -- (path picture bounding box.north east);
}}]
\pgfmathsetmacro{\R}{0.5}
\pgfmathsetmacro{\W}{1.25}
\pgfmathsetmacro{\H}{1.25}
\draw[fill=gray] (-7.5,1.73) rectangle (-6,3.21) node[white] at (-6.75,2.32) {\small Controller};
\node[white] at (-6.75,2.69) {\small RIS};
\foreach \i in {1,...,14} {
    \draw[] (-6,3.23-0.1*\i)-- (-5.9,3.23-0.1*\i);
    \draw[] (-7.5,3.23-0.1*\i)-- (-7.6,3.23-0.1*\i);
    \draw[] (-6-0.1*\i,3.23)-- (-6-0.1*\i,3.33);
    \draw[] (-6-0.1*\i,1.73)-- (-6-0.1*\i,1.63);}
\draw [stealth-stealth,thick] (-5.9,2.48) -- (-3.96,2.48);
\path (-7.125,-1.7)  pic[scale=0.5,color=black] {relay={\small BS}};
\draw [stealth-stealth,thick] (-6.75,1.63) -- (-6.75,0);
\filldraw[fill=babyblue, draw=babyblue] (-3.14-0.82,-1.5+0.82) rectangle (0.88-0.82,2.52+0.82) node[] at (-1.2-0.82,2.15+0.82) {\small Control circuit board};
\filldraw[fill=copper, draw=copper] (-3.14,-1.5) rectangle (0.88,2.52) node[] at (-1.2,2.15) {\small Copper};
\filldraw[fill=apricot, draw=apricot] (-2.32,-2.32) rectangle (1.7,1.7);
\foreach \i in {-3,...,2} {
    \foreach \j in {-3,...,2} {
    \node [draw, copper, line width=2,trapezium, trapezium left angle=90, trapezium right angle=90, minimum width=\R cm,outer sep=0pt,fill=apricot, rotate = 0] at (\i*\W*\R,\j*\H*\R) {};
    \draw[copper, line width=2] (\i*\W*\R,\j*\H*\R+0.5*\R) -- (\i*\W*\R,\j*\H*\R+0.15*\R);
    \draw[copper, line width=2] (\i*\W*\R,\j*\H*\R-0.5*\R) -- (\i*\W*\R,\j*\H*\R-0.15*\R);
    \draw[copper, line width=2] (\i*\W*\R-0.25*\R,\j*\H*\R-0.15*\R) -- (\i*\W*\R+0.25*\R,\j*\H*\R-0.15*\R);
    \draw[copper, line width=2] (\i*\W*\R-0.25*\R,\j*\H*\R+0.15*\R) -- (\i*\W*\R+0.25*\R,\j*\H*\R+0.15*\R);}}
    
\begin{polaraxis}[at={(-3.87in,-1.65in)},
		grid=none,
		width=9cm,
		xtick=\empty,
		ytick=\empty,
		rotate=5,
		axis line style={draw=none},
		tick style={draw=none}]
		\addplot+[mark=none, fill=mycolor3, opacity=0.4, line width=0.75pt, color = mycolor3, domain=-180:180,samples=600] 
		{0.8*abs(sin(2*x*3.141592)/(3.141592*3.141592/180*x))}; 
\end{polaraxis}

\begin{polaraxis}[at={(-1.9in,-1.42in)},
		grid=none,
		width=9cm,
		xtick=\empty,
		ytick=\empty,
		rotate=30,
		axis line style={draw=none},
		tick style={draw=none}]
		\addplot+[mark=none, fill=mycolor3, opacity=0.4, line width=0.75pt, color = mycolor3, domain=-180:180,samples=600] 
		{0.1*abs(sin(4*x*3.141592)/(3.141592*3.141592/180*x))}; 
\end{polaraxis}

\begin{polaraxis}[at={(-1.84in,-1.65in)},
		grid=none,
		width=9cm,
		xtick=\empty,
		ytick=\empty,
		rotate=-15,
		axis line style={draw=none},
		tick style={draw=none}]
		\addplot+[mark=none, fill=mycolor3, opacity=0.4, line width=0.75pt, color = mycolor3, domain=-180:180,samples=600] 
		{0.1*abs(sin(4*x*3.141592)/(3.141592*3.141592/180*x))}; 
\end{polaraxis}

\begin{polaraxis}[at={(-1.89in,-1.94in)},
		grid=none,
		width=9cm,
		xtick=\empty,
		ytick=\empty,
		rotate=-60,
		axis line style={draw=none},
		tick style={draw=none}]
		\addplot+[mark=none, fill=mycolor3, opacity=0.4, line width=0.75pt, color = mycolor3, domain=-180:180,samples=600] 
		{0.1*abs(sin(4*x*3.141592)/(3.141592*3.141592/180*x))}; 
\end{polaraxis}

\end{tikzpicture}}
    \vspace{-10mm}
    \caption{\small Hardware structure of a RIS.}
    \label{fig:RIS_structure}
\end{figure}
The main contributions of this paper as compared to the existing works are as follows.

\begin{itemize}
    \item We propose a sequential phase-alignment algorithm to maximize the received power at the harvesting unit based on power measurements.
    \item We prove the optimality of the proposed algorithm analytically.
    \item Our simulation results show that the proposed algorithm greatly outperforms the benchmark random phase-update method in terms of the number of required measurements to achieve the optimum.
\end{itemize}

The remainder of this paper is organized as follows. Section~\ref{sec:PF} describes the \gls*{ris} hardware architecture, two different energy harvesting schemes in the \gls*{ris}, and the phase alignment problem. Section \ref{sec:PS} describes the proposed abstract model of the \gls*{ris} operation and the proposed solution. Simulation results are presented in Section \ref{sec:NumResults}, while Section \ref{sec:Conclusion} provides our conclusions.

\textit{Notations}: We denote sets by upper-case script letters, e.g., $\mathcal{S}$. The only exceptions are the sets of natural numbers, real numbers, and complex numbers that we  represent with $\mathbb{N}$, $\mathbb{R}$, and $\mathbb{C}$, respectively.  The cardinality of a set $\mathcal{S}$ is represented by $|\mathcal{S}|$. Vectors are indicated by lower-case bold-face letters, e.g., $\bm{x}$, and $x_{i}$ denotes the $i$th element of $\bm{x}$. We represent matrices by upper-case bold-face letters, e.g., $\bm{A}$. We also use $\delequal$ to indicate an equal by definition sign. With $\left(\cdot\right)^{\mathsf{T}}$, we denote the transpose operator. We denote the imaginary unit by $j\delequal\sqrt{-1}$. We represent the conjugate of a complex number $z$ with $z^{*}$. However, we denote the optimal solution with the superscript ${\star}$, e.g., $x^{\star}$. Also, the operation $\operatorname{Arg}{\left(z\right)}$ returns the single-valued argument of $z$ that lies within the interval $(-\pi, \pi]$, while $\arg{\left(z\right)}$ returns the set of all possible values of the argument of $z$.

\section{Problem Description}\label{sec:PF}
This section presents 1) the \gls*{ris} device architecture, 2) the energy harvesting schemes, and 3) the phase alignment problem for \gls*{ris}-assisted energy harvesting.

\subsection{RIS Hardware Architecture}

Fig. \ref{fig:RIS_structure} illustrates the hardware architecture of a typical \gls*{ris}. The front layer consists of several metal elements that are printed on a dielectric substrate and arranged in a two-dimensional array to reflect the incoming signals. For each patch element, there is a controllable circuit that is used to adjust the phase of the reflected signals and direct the signal in a desired direction.
 There is a control circuit board that can control the reflection amplitudes and phase shifts of the patch elements. The \gls*{ris} controller can communicate with the network components such as \glspl*{bs} via a connectivity interface \cite{9140329}.  Finally, the \gls*{ris} requires a power supply to adjust the phases and then maintain the desired reflection state.

\subsection{Energy Harvesting Schemes}
In general, an \gls*{ris} can harvest \gls*{rf} energy directly or indirectly from an ambient \gls*{rf} source. We will describe these scenarios below and later show that they lead to system models of the same kind. 

\subsubsection{Direct Energy Harvesting}
The \gls*{ris} elements are capable of reflecting and receiving the incident \gls*{em} waves. During the harvesting phase, the \gls*{ris} operates in the reception mode, where it combines the received signals from each element with some phase shifts.\footnote{There is a type of metasurface implementation called holographic beamforming that can pass the incident EM waves from one side to the other. The metamaterial can add adjustable phase shifts, similar to a phased array, but with a different implementation \cite{black2017holographic}.} The \gls*{ris} can use the harvested energy to fully sustain its operation or, if the energy is insufficient, it can decrease the consumption from other energy sources, as a first step toward achieving a zero-energy \gls*{ris} system. The abstract model of the direct energy harvesting operation is shown in Fig. \ref{fig:direct_RIS}.

\subsubsection{Indirect Energy Harvesting}
The \gls*{ris} is generally deployed and designed to reflect the \gls*{em} waves towards a desired location (typically a receiver location). Inspired by this principle, the energy harvesting device can alternatively be deployed in front of the \gls*{ris} (at a short distance) and the phases of the \gls*{ris} elements can be aligned so they combine constructively at the location of the energy harvesting device. This device then can return the energy to the \gls*{ris} via a cable. The concept of indirect energy harvesting is demonstrated in Fig. \ref{fig:Indirect_RIS}.

\begin{figure*}[!t]
          \centering
          \hspace{-17mm}
          \subfloat[]{\scalebox{0.67}{\definecolor{mycolor1}{rgb}{0.85,0.325,0.098}
\definecolor{mycolor2}{rgb}{0.00000,0.44700,0.74100}
\definecolor{mycolor3}{rgb}{0.00000,0.49804,0.00000}
\definecolor{carnationpink}{rgb}{1.0, 0.65, 0.79}
\definecolor{carnelian}{rgb}{0.7, 0.11, 0.11}
\definecolor{caputmortuum}{rgb}{0.35, 0.15, 0.13}
\definecolor{apricot}{rgb}{0.98, 0.81, 0.69}
\definecolor{copper}{rgb}{0.72, 0.45, 0.2}
\definecolor{babyblue}{rgb}{0.54, 0.81, 0.94}
\definecolor{gray}{rgb}{0.5, 0.5, 0.5}
\usetikzlibrary{decorations.pathreplacing}
\linespread{0.6}
\tikzstyle{startstop} = [rectangle, rounded corners, minimum width=1.5cm, minimum height=1cm,text centered, draw=black, fill=gray!20, inner sep=-0.3ex]

\tikzstyle{arrow} = [thick,->,>=stealth]

\begin{tikzpicture}[cross/.style={path picture={ 
  \draw[black]
(path picture bounding box.south east) -- (path picture bounding box.north west) (path picture bounding box.south west) -- (path picture bounding box.north east);
}}]

\tikzset{radiation/.style={{decorate,decoration={expanding waves,angle=90,segment length=4pt}}},
         relay/.pic={
        code={\tikzset{scale=5/10}
            \draw[thick] (0,0) -- (1,4);
            \draw[thick] (3,0) -- (2,4);
            \draw[thick] (0,0) arc (180:0:1.5 and -0.5) node[above, midway]{#1};
            \node[inner sep=4pt] (circ) at (1.5,5.5) {};
            \draw[thick] (1.5,5.5) circle(8pt);
            \draw[thick] (1.5,5.5cm-8pt) -- (1.5,4);
            \draw[thick] (1.5,4) ellipse (0.5 and 0.166);
            \draw[thick,radiation,decoration={angle=45}] (1.5cm+8pt,5.5) -- +(0:2);
            \draw[thick,radiation,decoration={angle=45}] (1.5cm-8pt,5.5) -- +(180:2);
  }}
}

\pgfmathsetmacro{\R}{0.375}
\pgfmathsetmacro{\W}{1.25}
\pgfmathsetmacro{\H}{1.25}
\pgfmathsetmacro{\RISx}{-5.455}
\pgfmathsetmacro{\RISy}{3.6}

\filldraw[fill=apricot, draw=apricot] (\RISx-2.11,\RISy-1.65) rectangle (\RISx+1.64,\RISy+1.18);
\foreach \i in {-4,...,3} {
    \foreach \j in {-3,...,2} {
    \node [draw, caputmortuum, line width=1,trapezium, trapezium left angle=90, trapezium right angle=90, minimum width=\R cm,outer sep = 0pt,fill=apricot, rotate = 0] at (\RISx+\i*\W*\R,\RISy+\j*\H*\R) {};
    \draw[caputmortuum, line width=1] (\RISx+\i*\W*\R,\RISy+\j*\H*\R+0.5*\R) -- (\RISx+\i*\W*\R,\RISy+\j*\H*\R+0.15*\R);
    \draw[caputmortuum, line width=1] (\RISx+\i*\W*\R,\RISy+\j*\H*\R-0.5*\R) -- (\RISx+\i*\W*\R,\RISy+\j*\H*\R-0.15*\R);
    \draw[caputmortuum, line width=1] (\RISx+\i*\W*\R-0.25*\R,\RISy+\j*\H*\R-0.15*\R) -- (\RISx+\i*\W*\R+0.25*\R,\RISy+\j*\H*\R-0.15*\R);
    \draw[caputmortuum, line width=1] (\RISx+\i*\W*\R-0.25*\R,\RISy+\j*\H*\R+0.15*\R) -- (\RISx+\i*\W*\R+0.25*\R,\RISy+\j*\H*\R+0.15*\R);}}

\path (-11,1)  pic[scale=0.5,color=black] {relay={TX}};

\node[text width=5.5em]  (SC)  at (-1,3.363) [startstop] {Signal\\\vspace{1mm}combiner};

\node[text width=5.5em]  (PS)  at (-5.69,6) [startstop] {Power\\\vspace{1mm}supply};

\node[text width=5.5em]  (EH)  at (-1,6) [startstop] {EH device};

\draw[arrow, line width=0.5mm, mycolor2] (-3.812,3.363) -- (SC.west);

\draw[arrow, line width=0.5mm, mycolor2] (SC.north) -- (EH.south);
\draw[arrow, line width=0.5mm, mycolor2] (EH.west) -- (PS.east);

\draw[arrow, line width=0.5mm, mycolor2] (PS.south) --++ (0,-0.7);

\node[caputmortuum] at (-5.7, 1.5) {\Large Reception mode};

\node[black] at (-9, 1) {$\bm{h}\delequal \begin{bmatrix}
h_{1} \\
\vdots \\
h_{N}
\end{bmatrix}$};


\node[black] at (-2.1, 5.3) {\small Measurements};

\draw[arrow, dotted, line width=0.5mm, mycolor2] (-1.95, 5.75) --++ (-1.22,0) --++ (0,-0.885);


\pgfmathsetmacro{\ControllerX}{2.52}
\pgfmathsetmacro{\ControllerY}{3.15}
\begin{pgflowlevelscope}{\pgftransformscale{0.75}}
\draw[fill=gray] (\ControllerX-7.5,\ControllerY+1.73) rectangle (\ControllerX-6,\ControllerY+3.23) node[white] at (\ControllerX-6.75,\ControllerY+2.30) {\small Controller};
\node[white] at (\ControllerX-6.75,\ControllerY+2.69) {\small RIS};
\foreach \i in {1,...,14} {
    \draw[] (\ControllerX-6,\ControllerY+3.23-0.1*\i)-- (\ControllerX-5.9,\ControllerY+3.23-0.1*\i);
    \draw[] (\ControllerX-7.5,\ControllerY+3.23-0.1*\i)-- (\ControllerX-7.6,\ControllerY+3.23-0.1*\i);
    \draw[] (\ControllerX-6-0.1*\i,\ControllerY+3.23)-- (\ControllerX-6-0.1*\i,\ControllerY+3.33);
    \draw[] (\ControllerX-6-0.1*\i,\ControllerY+1.73)-- (\ControllerX-6-0.1*\i,\ControllerY+1.63);}
\end{pgflowlevelscope}

\begin{polaraxis}[at={(-4.61in,0.2in)},
		grid=none,
		width=5.5cm,
		xtick=\empty,
		ytick=\empty,
		rotate=15,
		axis line style={draw=none},
		tick style={draw=none}]
		\addplot+[mark=none, fill=mycolor3, opacity=0.4, line width=0.75pt, color = mycolor3, domain=-180:180,samples=600] 
		{0.1*abs(sin(0.5*x*3.141592)/(3.141592*3.141592/180*x))}; 
\end{polaraxis}

\end{tikzpicture}}%
              \label{fig:direct_RIS}}
          \hfil
          \hspace{-5mm}
          \subfloat[]{\scalebox{0.67}{\definecolor{mycolor1}{rgb}{0.85,0.325,0.098}
\definecolor{mycolor2}{rgb}{0.00000,0.44700,0.74100}
\definecolor{mycolor3}{rgb}{0.00000,0.49804,0.00000}

\definecolor{apricot}{rgb}{0.98, 0.81, 0.69}
\definecolor{copper}{rgb}{0.72, 0.45, 0.2}
\definecolor{babyblue}{rgb}{0.54, 0.81, 0.94}
\definecolor{gray}{rgb}{0.5, 0.5, 0.5}
\usetikzlibrary{decorations.pathreplacing}
\linespread{0.6}
\tikzstyle{startstop} = [rectangle, rounded corners, minimum width=1.5cm, minimum height=1cm,text centered, draw=black, fill=gray!20, inner sep=-0.3ex]

\tikzstyle{arrow} = [thick,->,>=stealth]

\begin{tikzpicture}[cross/.style={path picture={ 
  \draw[black]
(path picture bounding box.south east) -- (path picture bounding box.north west) (path picture bounding box.south west) -- (path picture bounding box.north east);
}}]

\tikzset{radiation/.style={{decorate,decoration={expanding waves,angle=90,segment length=4pt}}},
         relay/.pic={
        code={\tikzset{scale=5/10}
            \draw[thick] (0,0) -- (1,4);
            \draw[thick] (3,0) -- (2,4);
            \draw[thick] (0,0) arc (180:0:1.5 and -0.5) node[above, midway]{#1};
            \node[inner sep=4pt] (circ) at (1.5,5.5) {};
            \draw[thick] (1.5,5.5) circle(8pt);
            \draw[thick] (1.5,5.5cm-8pt) -- (1.5,4);
            \draw[thick] (1.5,4) ellipse (0.5 and 0.166);
            \draw[thick,radiation,decoration={angle=45}] (1.5cm+8pt,5.5) -- +(0:2);
            \draw[thick,radiation,decoration={angle=45}] (1.5cm-8pt,5.5) -- +(180:2);
  }}
}

\pgfmathsetmacro{\R}{0.375}
\pgfmathsetmacro{\W}{1.25}
\pgfmathsetmacro{\H}{1.25}
\pgfmathsetmacro{\RISx}{-5.455}
\pgfmathsetmacro{\RISy}{3.6}

\filldraw[fill=apricot, draw=apricot] (\RISx-2.11,\RISy-1.65) rectangle (\RISx+1.64,\RISy+1.18);
\foreach \i in {-4,...,3} {
    \foreach \j in {-3,...,2} {
    \node [draw, copper, line width=1,trapezium, trapezium left angle=90, trapezium right angle=90, minimum width=\R cm,outer sep = 0pt,fill=apricot, rotate = 0] at (\RISx+\i*\W*\R,\RISy+\j*\H*\R) {};
    \draw[copper, line width=1] (\RISx+\i*\W*\R,\RISy+\j*\H*\R+0.5*\R) -- (\RISx+\i*\W*\R,\RISy+\j*\H*\R+0.15*\R);
    \draw[copper, line width=1] (\RISx+\i*\W*\R,\RISy+\j*\H*\R-0.5*\R) -- (\RISx+\i*\W*\R,\RISy+\j*\H*\R-0.15*\R);
    \draw[copper, line width=1] (\RISx+\i*\W*\R-0.25*\R,\RISy+\j*\H*\R-0.15*\R) -- (\RISx+\i*\W*\R+0.25*\R,\RISy+\j*\H*\R-0.15*\R);
    \draw[copper, line width=1] (\RISx+\i*\W*\R-0.25*\R,\RISy+\j*\H*\R+0.15*\R) -- (\RISx+\i*\W*\R+0.25*\R,\RISy+\j*\H*\R+0.15*\R);}}

\path (-11,1)  pic[scale=0.5,color=black] {relay={TX}};

\node[text width=5.5em]  (EH)  at (-0.5,2.5) [startstop] {EH device};

\node at (-2.6, 6.35) {Power transfer via cable};

\node[copper] at (-5.7, 1.5) {\Large Reflection mode};

\node[text width=5.5em]  (PS)  at (-5.69,6) [startstop] {Power\\\vspace{1mm}supply};

\draw[arrow, line width=0.5mm, mycolor2] (EH.north) |- (PS.east);

\draw[arrow, line width=0.5mm, mycolor2] (PS.south) --++ (0,-0.7);

\node[black] at (-9, 1) {$\bm{h}\delequal \begin{bmatrix}
h_{1} \\
\vdots \\
h_{N}
\end{bmatrix}$};

\node[black] at (-2.7, 1.5) {$\bm{g}\delequal \begin{bmatrix}
g_{1} \\
\vdots \\
g_{N}
\end{bmatrix}$};


\node[black] at (-1.53, 4.6) {\small Measurements};


\draw[arrow, dotted, line width=0.5mm, mycolor2] (-0.9, 3) --++ (0, 1.22) --++ (-1.63,0);


\pgfmathsetmacro{\ControllerX}{2.52}
\pgfmathsetmacro{\ControllerY}{3.15}
\begin{pgflowlevelscope}{\pgftransformscale{0.75}}
\draw[fill=gray] (\ControllerX-7.5,\ControllerY+1.73) rectangle (\ControllerX-6,\ControllerY+3.23) node[white] at (\ControllerX-6.75,\ControllerY+2.30) {\small Controller};
\node[white] at (\ControllerX-6.75,\ControllerY+2.69) {\small RIS};
\foreach \i in {1,...,14} {
    \draw[] (\ControllerX-6,\ControllerY+3.23-0.1*\i)-- (\ControllerX-5.9,\ControllerY+3.23-0.1*\i);
    \draw[] (\ControllerX-7.5,\ControllerY+3.23-0.1*\i)-- (\ControllerX-7.6,\ControllerY+3.23-0.1*\i);
    \draw[] (\ControllerX-6-0.1*\i,\ControllerY+3.23)-- (\ControllerX-6-0.1*\i,\ControllerY+3.33);
    \draw[] (\ControllerX-6-0.1*\i,\ControllerY+1.73)-- (\ControllerX-6-0.1*\i,\ControllerY+1.63);}
\end{pgflowlevelscope}

\begin{polaraxis}[at={(-2.55in,0.2in)},
		grid=none,
		width=6.6cm,
		xtick=\empty,
		ytick=\empty,
		rotate=-21,
		axis line style={draw=none},
		tick style={draw=none}]
		\addplot+[mark=none, fill=mycolor3, opacity=0.4, line width=0.75pt, color = mycolor3, domain=-180:180,samples=600] 
		{0.08*abs(sin(4*x*3.141592)/(3.141592*3.141592/180*x))}; 
\end{polaraxis}

\begin{polaraxis}[at={(-4.61in,0.2in)},
		grid=none,
		width=5.5cm,
		xtick=\empty,
		ytick=\empty,
		rotate=15,
		axis line style={draw=none},
		tick style={draw=none}]
		\addplot+[mark=none, fill=mycolor3, opacity=0.4, line width=0.75pt, color = mycolor3, domain=-180:180,samples=600] 
		{0.1*abs(sin(0.5*x*3.141592)/(3.141592*3.141592/180*x))}; 
\end{polaraxis}

\end{tikzpicture}}%
             \label{fig:Indirect_RIS}}
          \caption{Energy harvesting schemes: (a) direct (b) indirect.}
          \label{fig:direct_Indirect_RIS}
\end{figure*}

\subsection{Phase Alignment Problem for the Energy Harvesting}
We consider a scenario of \gls*{eh} from an ambient \gls*{rf} source by an \gls*{ris} with no prior  \gls*{csi}, and there is no coordination between the \gls*{ris} and the \gls*{rf} transmitter. We assume that the transmit power and the location of the transmitter are unknown.\footnote{This scenario is more practical (compared to a scenario with coordination between the transmitter and the \gls*{ris}) as the transmitter may not be designed to coordinate the \gls*{csi} with the \gls*{ris} or only do it when it requests that the \gls*{ris} is supporting its data transmissions.}

For the direct (indirect) scheme, we assume the measured \gls*{rf} power by the \gls*{ris} (\gls*{eh} device) has the following expression \cite{9721205}
\begin{equation}\label{eq:Power_func}
    y =\left|\sum_{n=1}^{N} z_{n}e^{j\vartheta_{n}}\right|^{2},
\end{equation}
where $z_{n}\in\mathbb{C}$ for each $0\leq n\leq N$. In \eqref{eq:Power_func}, $z_{n}$ not only includes all channel gains between transmitters and the energy harvester (except the adjustable phase shift $\vartheta_{n}$), but also takes into account the transmission power. Also, $\vartheta_{n}\in[0, 2\pi)$ is the adjustable phase shift of the $n$th element of the \gls*{ris}. Also $y$ represents the measured received power that is obtained from the reading of the power in the \gls*{eh} device\cite{9721205,haykin2008communication}.\footnote{Power measurements can be obtained from the input of the energy harvesting unit using a circuit such as a voltmeter. Alternatively, power measurements can be applied at the output of the energy harvesting unit by compensating for the nonlinear harvesting conversion efficiency.} 

For direct and indirect energy harvesting schemes, and for all $1\leq n\leq N$, we have
\begin{equation} \label{eq:z_definition}
    z_{n}\delequal\begin{cases}
    P_{\text{t}}h_{n}, \qquad &\text{Direct EH}\\
    P_{\text{t}}h_{n}g_{n}, \qquad &\text{Indirect EH},
    \end{cases}
\end{equation}
where $h_{n}\in \mathbb{C}$ and $g_{n}\in \mathbb{C}$ are the channel gains from the transmitter to the \gls*{ris} element $n$ and from the \gls*{ris} element $n$ to the \gls*{eh} device in the indirect case, respectively. Also, $P_{\text{t}}$ is the transmit power.

We assume no \gls*{csi} is available at the \gls*{ris} controller (i.e., $z_{n}$ is unavailable to the \gls*{ris}). This scenario is more general compared to a scenario with coordination between the transmitters and the \gls*{ris}, since the transmitter might lack a protocol to reach the \gls*{ris} and can even be unaware of its existence.

The values of $\left\{z_{n}\right\}_{n=1}^{N}$ depend on the geometry and propagation environment and can be estimated by measuring amplitude and phase using \gls*{rf} receiver circuits. Since the \gls*{ris} lack such \gls*{rf} chains, the \gls*{ris} cannot estimate the amplitudes and phases. Hence, the values of $\left\{z_{n}\right\}_{n=1}^N$ will remain unknown to the \gls*{ris}. On the other hand, with power measurement at the \gls*{eh} device, the \gls*{ris} can measure the combined power from all elements for any feasible phase configuration. 

The general optimization problem is to maximize the received \gls*{rf} power\footnote{Note that the harvested \gls*{rf} power is a nonlinear function of the received \gls*{rf} power that is called the conversion efficiency function. We consider the problem of choosing optimal phase shift $\bm{\vartheta}^{\star}$ for the \gls*{ris} elements to maximize the harvesting \gls*{rf} power. However, since the conversion efficiency function is generally an increasing function, the optimal solution for maximization of the harvesting \gls*{rf} power is equivalent to the one for the maximization of the measured received \gls*{rf} power.} by finding proper phase shifts (i.e.,  $\bm{\vartheta}{\delequal}\left[\vartheta_{1},\dots,\vartheta_{N}\right]^{\mathsf{T}}$) for the \gls*{ris} array.

\section{Proposed Phase Alignment Scheme}\label{sec:PS}
In this section, we investigate the problem of received \gls*{rf} power maximization using a dynamic sequence of power measurements. We propose a model for the RIS energy harvesting operations and an algorithm to find the optimal phase of \gls*{ris} elements. 

\subsection{An Abstract Model of the Proposed RIS Operation}
A simplified model of the operation of an \gls*{ris} with an energy harvesting module is shown in Fig. \ref{fig:Harvesting_Structure}. The network entity manager can assess the network environment, including but not limited to network demand, \gls*{snr}, and power measurements from the \gls*{ris}, and determine if the \gls*{ris} should function to provide connectivity or to harvest energy. Within the \gls*{ris} controller, the preferred phase of each element can be determined based on functionality, and the controller can adjust the phase of the elements accordingly. The energy harvester can be located either outside or inside the \gls*{ris} surface. It sends power measurements to a phase alignment algorithm implemented in the network entity manager module. These values can be used by the network entity manager to decide whether the \gls*{ris} should be in energy harvesting or data transmission mode. After reaching an appropriate phase configuration, the harvested energy can be fed back to the power supply to be used by the \gls*{ris}.

\begin{figure}[t]
    \hspace{-9mm}
    \scalebox{0.8}{
\definecolor{mycolor1}{rgb}{0.85,0.325,0.098}
\definecolor{mycolor2}{rgb}{0.00000,0.44700,0.74100}
\definecolor{mycolor3}{rgb}{0.00000,0.49804,0.00000}

\definecolor{apricot}{rgb}{0.98, 0.81, 0.69}
\definecolor{copper}{rgb}{0.72, 0.45, 0.2}
\definecolor{babyblue}{rgb}{0.54, 0.81, 0.94}
\definecolor{gray}{rgb}{0.5, 0.5, 0.5}
\usetikzlibrary{decorations.pathreplacing}
\linespread{0.6}
\tikzstyle{startstop} = [rectangle, rounded corners, minimum width=1.5cm, minimum height=0.8cm,text centered, draw=black, fill=mycolor1!20, inner sep=-0.3ex]
\tikzstyle{startstop_blue} = [rectangle, rounded corners, minimum width=1.5cm, minimum height=0.8cm,text centered, draw=black, fill=mycolor2!20, inner sep=-0.3ex]
\tikzstyle{startstop_green} = [rectangle, rounded corners, minimum width=1.5cm, minimum height=0.8cm,text centered, draw=black, fill=mycolor3!20, inner sep=-0.3ex]

\tikzstyle{decision} = [diamond, aspect=2,  text centered, draw=black, fill=mycolor1!20, inner sep=-0.3ex]
\tikzstyle{arrow} = [thick,->,>=stealth]

\begin{tikzpicture}[cross/.style={path picture={ 
  \draw[black]
(path picture bounding box.south east) -- (path picture bounding box.north west) (path picture bounding box.south west) -- (path picture bounding box.north east);
}}]



\node[text width=5.5em]  (NE) [startstop] {\scriptsize Network entity\\ manager};
\node[text width=4.5em] (OM) [decision, above of=NE, yshift = 3cm] {\scriptsize Operational mode};
\node[text width=8.5em] (PA1) [startstop_blue, left of=OM, yshift = 1cm, xshift = -2.3cm] {\scriptsize Phase alignment algorithm \\ for data transmission};
\node[text width=8.5em] (PA2) [startstop_green, left of=OM, yshift = -1cm, xshift = -2.3cm] {\scriptsize Phase alignment algorithm \\ for energy harvesting};

\draw[arrow] (NE.east) -- ++(0.5,0) |- coordinate[pos=0.25](m1)(OM.east);
\node[text width=4em]  [black,left=-0.2 of m1] {\scriptsize Operational mode signal};

\draw[arrow, color = mycolor2] (OM.north) |- coordinate[pos=0.65](m2)(PA1.east);
\node[text width=4em]  [black,above=0 of m2] {\scriptsize {\color{mycolor2} Coverage}};
\draw[arrow, color = mycolor3] (OM.south) |- coordinate[pos=0.65](m3)(PA2.east);
\node[text width=4em]  [black,above=0 of m3] {\scriptsize {\color{mycolor3} Energy harvesting}};

\pgfmathsetmacro{\R}{0.25}
\pgfmathsetmacro{\W}{1.25}
\pgfmathsetmacro{\H}{1.25}
\pgfmathsetmacro{\RISx}{-5.455}
\pgfmathsetmacro{\RISy}{0.6}

\filldraw[fill=apricot, draw=apricot] (\RISx-0.805,\RISy-0.8) rectangle (\RISx+1.1015,\RISy+0.8);
\foreach \i in {-2,...,3} {
    \foreach \j in {-2,...,2} {
    \node [draw, copper, line width=1,trapezium, trapezium left angle=90, trapezium right angle=90, minimum width=\R cm,outer sep=0pt,fill=apricot, rotate = 0] at (\RISx+\i*\W*\R,\RISy+\j*\H*\R) {};
    \draw[copper, line width=1] (\RISx+\i*\W*\R,\RISy+\j*\H*\R+0.5*\R) -- (\RISx+\i*\W*\R,\RISy+\j*\H*\R+0.15*\R);
    \draw[copper, line width=1] (\RISx+\i*\W*\R,\RISy+\j*\H*\R-0.5*\R) -- (\RISx+\i*\W*\R,\RISy+\j*\H*\R-0.15*\R);
    \draw[copper, line width=1] (\RISx+\i*\W*\R-0.25*\R,\RISy+\j*\H*\R-0.15*\R) -- (\RISx+\i*\W*\R+0.25*\R,\RISy+\j*\H*\R-0.15*\R);
    \draw[copper, line width=1] (\RISx+\i*\W*\R-0.25*\R,\RISy+\j*\H*\R+0.15*\R) -- (\RISx+\i*\W*\R+0.25*\R,\RISy+\j*\H*\R+0.15*\R);}}

\node[text width=4em] (PC) [startstop, left of=PA2, yshift = -1.8cm, xshift = -1cm] {\scriptsize Phase \\ controller};

\draw[arrow, color = mycolor3] (PA2.west) -| coordinate[pos=0.8](m4)(PC.north);
\node[text width=4em]  [black,right=0 of m4] {\scriptsize RIS element parameters};

\draw[arrow, color = mycolor2] (PA1.west) -- ++(-0.85,0) -- ++(0,-3.3935);

\node[text width=4em] (EH) [startstop, below of=PC, yshift = -0.2 cm] {\scriptsize Energy \\ harvester};

\draw[dashed] (-6.3,0.6) rectangle (-1.63,5.6);
\node at (-5.51, 5.4) {\scriptsize RIS controller};

\draw[arrow] (EH.east) -| coordinate[pos=0.5](m5)(PA2.south);
\node at (-3,-0.2) {\scriptsize Power measurements};

\draw[arrow] (EH.east) -- coordinate[pos=0.8](m6)(NE.west);

\node[text width=4em] (PS) [startstop, left of=PA2, xshift = -3 cm, yshift = -1.5 cm] {\scriptsize Power \\ supply};

\draw[arrow] (EH.west) -| coordinate[pos=0.8](m7)(PS.south);

\draw[arrow] (PS.north) |- (-8.284,2.8);

\draw[dashed, thick] (-8.3,-0.55) rectangle (-1.43,5.8);

\node at (-8, 5.6) {\scriptsize RIS};

\node at (-4.2, 0.75) {\scriptsize \rotatebox{90}{RIS elements}};

\end{tikzpicture}}
    \caption{\small The abstract model of the operation of a \gls*{ris} with an energy harvesting module.}
    \label{fig:Harvesting_Structure}
\end{figure}
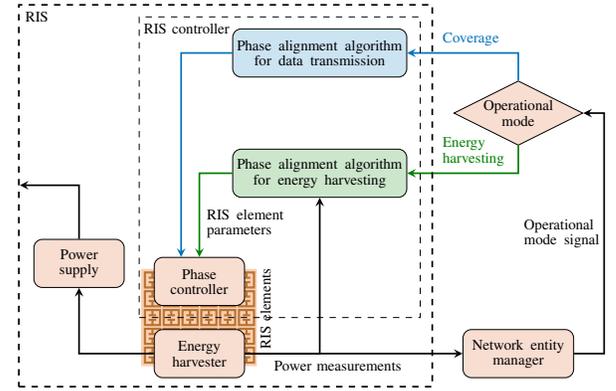

\subsection{Proposed Algorithm}
We consider an ideal scheme, where the \gls*{ris} is capable of adding continuous phase shifts to the incident \gls*{em} waves. 
First, we start with stating Lemma \ref{Lemma_1} that provides a mechanism to find the optimal phase shifts in the case of known \gls*{csi}, then based on that, we develop our proposed algorithm that require no a priori \gls*{csi}.
\begin{lem}\label{Lemma_1}
Let $f:\mathbb{R}^{N}\rightarrow \mathbb{R}$ be the function 
\begin{equation}\label{eq:def_f}
    f{\left(\bm{\vartheta}\right)}\delequal\left|\sum_{n=1}^{N} z_{n}e^{j\vartheta_{n}}\right|^{2},
\end{equation}
where $z_{n}\in \mathbb{C}$ for all $1\leq n\leq N$. The optimal variable $\bm{\vartheta}^{\star}\delequal\left(\vartheta_{1}^{\star}, \dots, \vartheta_{N}^{\star}\right)$ that maximizes $f{\left(\cdot\right)}$ is given by
\begin{equation}
     \vartheta_{n}^{\star}=\vartheta_{0}-\arg\left(z_{n}\right),
\end{equation}
where $\vartheta_{0}\in \mathbb{R}$, and
\begin{equation}
    f{\left(\bm{\vartheta}^{\star}\right)} = \left(\sum_{n=1}^{N}\left|z_{n}\right|\right)^{2}
\end{equation}
is the maximum value of $f{\left(\cdot\right)}$.
\end{lem}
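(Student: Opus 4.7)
The plan is to reduce the problem to a single application of the triangle inequality in $\mathbb{C}$, then exhibit a configuration achieving the bound. First I would write each summand in polar form as $z_{n}e^{j\vartheta_{n}} = |z_{n}|e^{j(\arg(z_{n})+\vartheta_{n})}$, which makes the magnitude of each term invariant under the choice of $\vartheta_{n}$. Applying the triangle inequality directly gives
\[
\left|\sum_{n=1}^{N} z_{n}e^{j\vartheta_{n}}\right| \;\leq\; \sum_{n=1}^{N} \left|z_{n}e^{j\vartheta_{n}}\right| \;=\; \sum_{n=1}^{N} |z_{n}|,
\]
so $f(\bm{\vartheta}) \leq \left(\sum_{n=1}^{N} |z_{n}|\right)^{2}$ holds universally, regardless of how the phases are chosen. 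This yields the claimed upper bound for free.

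Next I would verify that the proposed choice $\vartheta_{n}^{\star} = \vartheta_{0} - \arg(z_{n})$ attains this bound. Substituting into the sum, each summand becomes $|z_{n}|e^{j\vartheta_{0}}$, i.e., all the complex terms become collinear in $\mathbb{C}$, pointing in the common direction $e^{j\vartheta_{0}}$. Pulling this common factor out gives $\sum_{n=1}^{N} z_{n}e^{j\vartheta_{n}^{\star}} = e^{j\vartheta_{0}}\sum_{n=1}^{N}|z_{n}|$, and taking the squared magnitude delivers exactly $\left(\sum_{n=1}^{N}|z_{n}|\right)^{2}$. Combined with the previous step, this proves optimality.

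The remaining observation is that the global phase $\vartheta_{0}$ is genuinely free: shifting every $\vartheta_{n}^{\star}$ by the same constant only multiplies the sum by a unit-modulus factor, which does not affect $f$. This is consistent with the lemma's statement that $\vartheta_{0} \in \mathbb{R}$ is arbitrary. The only mild subtlety is the degenerate case $z_{n}=0$, where $\arg(z_{n})$ is not well-defined; however, such terms contribute nothing to the sum for any $\vartheta_{n}$, so they can be absorbed into the multivalued convention for $\arg$ without affecting the argument. I do not anticipate a genuine obstacle in this proof, since the result is essentially the equality condition in the triangle inequality applied to a finite set of complex numbers.
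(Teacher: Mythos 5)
Your proof is correct: the triangle inequality gives the universal upper bound $f(\bm{\vartheta})\leq\left(\sum_{n=1}^{N}|z_{n}|\right)^{2}$, and substituting $\vartheta_{n}^{\star}=\vartheta_{0}-\arg(z_{n})$ makes every summand equal to $|z_{n}|e^{j\vartheta_{0}}$, which attains it. The paper in fact states Lemma~\ref{Lemma_1} without any proof, evidently regarding it as the standard equality case of the triangle inequality, so your argument is precisely the canonical one; your remarks on the free global phase $\vartheta_{0}$ and the degenerate case $z_{n}=0$ are correct and cost nothing.
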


\begin{thm}\label{Theorem_2}
Let $f:\mathbb{R}\rightarrow \mathbb{R}$ be a function of the form
$$f{\left(\vartheta\right)}=\left|z_{0}+ ze^{j\vartheta}\right|^{2},$$ 
where $z_{0}, z\in \mathbb{C}$. Without knowing the explicit values of $z_{0}$ and $z$, the optimal variable $\vartheta^{\star}$ that maximizes $f{\left(\cdot\right)}$ can be computed as
\begin{equation}
      \vartheta^{\star}=\arg{\left(x_{2}+j x_{3}\right)},
\end{equation}
where $\bm{x}\delequal \bm{A}^{-1}\bm{y}$. The matrix $\bm{A}$ is defined as 
\begin{equation} \label{eq:A}
\bm{A} \delequal \begin{bmatrix}
1 & \cos\left(\varphi_{1}\right) &  \sin\left(\varphi_{1}\right)\\
1 & \cos\left(\varphi_{2}\right) &  \sin\left(\varphi_{2}\right)\\
1 & \cos\left(\varphi_{3}\right) &  \sin\left(\varphi_{3}\right)
\end{bmatrix},
\end{equation}
and $\bm{y}\delequal\left[f{\left(\varphi_{1}\right)}, f{\left(\varphi_{2}\right)}, f{\left(\varphi_{3}\right)}\right]^\mathsf{T}$ is the measurement vector. Note that $\varphi_{1}, \varphi_{2}, \varphi_{3}\in \left[0, 2\pi\right)$ are selected such that $\det{\left(\bm{A}\right)}\neq 0$.


\end{thm}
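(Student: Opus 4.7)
The plan is to reduce $f$ to a trigonometric function of $\vartheta$ involving only three real unknown coefficients, recover those coefficients from the three power measurements via the stated linear system, and then apply the standard $a\cos\vartheta+b\sin\vartheta$ argmax identity.

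First I would carry out the structural expansion. Applying $|u+v|^{2}=|u|^{2}+|v|^{2}+2\operatorname{Re}(uv^{*})$ to $f(\vartheta)=|z_{0}+ze^{j\vartheta}|^{2}$, then splitting $\operatorname{Re}(z_{0}z^{*}e^{-j\vartheta})$ into contributions from $\operatorname{Re}(z_{0}z^{*})$ and $\operatorname{Im}(z_{0}z^{*})$ via $e^{-j\vartheta}=\cos\vartheta-j\sin\vartheta$, I would arrive at
\begin{equation*}
f(\vartheta)=x_{1}+x_{2}\cos\vartheta+x_{3}\sin\vartheta,
\end{equation*}
with $x_{1}\delequal|z_{0}|^{2}+|z|^{2}$, $x_{2}\delequal 2\operatorname{Re}(z_{0}z^{*})$, and $x_{3}\delequal 2\operatorname{Im}(z_{0}z^{*})$. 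The essential observation is that even though $z_{0}$ and $z$ are unknown complex numbers, $f$ is affine in the three real unknowns $(x_{1},x_{2},x_{3})$ through the fixed basis $\{1,\cos\vartheta,\sin\vartheta\}$.

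Next, evaluating the above identity at $\vartheta=\varphi_{1},\varphi_{2},\varphi_{3}$ produces exactly the linear system $\bm{A}\bm{x}=\bm{y}$, where $\bm{A}$ is as defined in \eqref{eq:A}. The hypothesis $\det(\bm{A})\neq 0$ then yields $\bm{x}=\bm{A}^{-1}\bm{y}$, so all three coefficients are recovered from three scalar power measurements, without any explicit amplitude or phase estimation.

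Finally, the maximization is elementary: write
\begin{equation*}
x_{2}\cos\vartheta+x_{3}\sin\vartheta=\operatorname{Re}\!\left((x_{2}-jx_{3})e^{j\vartheta}\right)=R\cos(\vartheta-\phi),
\end{equation*}
where $R=\sqrt{x_{2}^{2}+x_{3}^{2}}$ and $\phi=\arg(x_{2}+jx_{3})$. Since $x_{1}$ does not depend on $\vartheta$, the global maximizer is $\vartheta^{\star}=\phi=\arg(x_{2}+jx_{3})$, as claimed. As a sanity check this agrees with Lemma~\ref{Lemma_1} in the $N=2$ case, since $\arg(x_{2}+jx_{3})=\arg(2z_{0}z^{*})=\arg(z_{0})-\arg(z)$ aligns the two complex components constructively. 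The main pitfall is purely cosmetic: bookkeeping the conjugation and sign conventions so that the final expression emerges as $\arg(x_{2}+jx_{3})$ (and not $\arg(x_{2}-jx_{3})$ or a negative thereof); once the reduction to a three-parameter sinusoid is in hand, the rest of the argument is routine.
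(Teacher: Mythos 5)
Your proof is correct and follows essentially the same route as the paper: the identical expansion $f(\vartheta)=x_{1}+x_{2}\cos\vartheta+x_{3}\sin\vartheta$ with the same three real unknowns $x_{1}=|z_{0}|^{2}+|z|^{2}$, $x_{2}=2\operatorname{Re}(z_{0}z^{*})$, $x_{3}=2\operatorname{Im}(z_{0}z^{*})$, and the same linear system $\bm{A}\bm{x}=\bm{y}$ inverted under the hypothesis $\det(\bm{A})\neq 0$. The only (immaterial) difference is the final step: the paper obtains $\vartheta^{\star}=\arg(z_{0}z^{*})=\arg(x_{2}+jx_{3})$ by invoking Lemma~\ref{Lemma_1}, whereas you maximize the sinusoid directly via the $R\cos(\vartheta-\phi)$ identity; both yield the same maximizer.
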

\begin{proof}
The proof is provided in Appendix \ref{App_T2}.
\end{proof}

In general, Theorem \ref{Theorem_2} allows for the computation of the optimal phase shift using only three measurements, without requiring knowledge of the explicit expression of the function. 
Specifically, for the measurement phases $\varphi_{1}=0$, $\varphi_{2}=\pi/2$, and $\varphi_{3}=\pi$, a simple expression for the optimal phase shift can be obtained as follows:
\begin{equation}\label{eq:contin_phase_update}
\vartheta^{\star}=\arg{\left(y_{1}-y_{3}+j\left(2y_{2}-y_{1}-y_{3}\right)\right)}.
\end{equation}
\RestyleAlgo{ruled}
\SetKwComment{Comment}{/* }{ */}
\begin{algorithm}[t!] \label{Alg1}
\caption{The proposed phase-alignment algorithm for power maximization}
\textbf{Input:} The number of RIS elements $N$ and the number of iterations $M$\\
\textbf{Output:} Near-optimal phase vector $\bm{\vartheta}^{\star}$ that maximizes the received power.\\
\textbf{Initialize:} $\bm{\vartheta}\gets \bm{\vartheta}_{0}$, $m\gets 0$, and $\bm{e}_{n}$ is a vector, where the component $n$ is $1$ and all other components are $0$. 

\While{$m < M$}{
$m \gets m + 1$\;
$n \gets 1$\;
\While{$n \leq N$}{
    $y_{1}\gets$ the measured power for the phase configuration $\bm{\vartheta}$\;
    $y_{2}\gets$ the measured power for the phase configuration $\bm{\vartheta}+\frac{\pi}{2}\bm{e}_{n}$\;
    $y_{3}\gets $ the measured power for the phase configuration $\bm{\vartheta}+\pi\bm{e}_{n}$\;
    $\bm{\vartheta} \gets \bm{\vartheta}+\arg\!\left(y_{1}-y_{3}+j\left(2y_{2}-y_{1}-y_{3}\right)\right)\bm{e}_{n}$\;
    $n \gets n + 1$\;
}}
$\bm{\vartheta}^{\star}\gets \bm{\vartheta}$\;
\end{algorithm}

Algorithm \ref{Alg1} is a sequential, iterative phase update algorithm. At each iteration, adjusting the phase of each element requires measuring the received power for three different phase configurations. The algorithm updates the phase of one element using \eqref{eq:contin_phase_update}, then proceeds to the next element until all $N$ elements have had their phases updated. This process is repeated $M$ times.

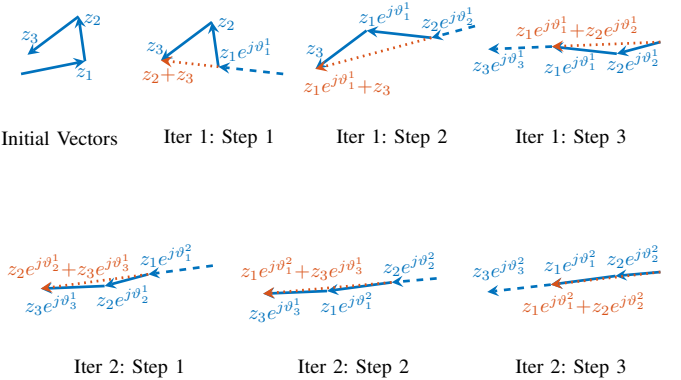
\begin{figure}
    \hspace{-15.7mm}
    \scalebox{0.85}{
\definecolor{mycolor1}{rgb}{0.85,0.325,0.098}
\definecolor{mycolor2}{rgb}{0.00000,0.44700,0.74100}
\definecolor{mycolor3}{rgb}{0.00000,0.49804,0.00000}

\definecolor{apricot}{rgb}{0.98, 0.81, 0.69}
\definecolor{copper}{rgb}{0.72, 0.45, 0.2}
\definecolor{babyblue}{rgb}{0.54, 0.81, 0.94}
\definecolor{gray}{rgb}{0.5, 0.5, 0.5}
\usetikzlibrary{decorations.pathreplacing}
\linespread{0.6}
\tikzstyle{startstop} = [rectangle, rounded corners, minimum width=1.5cm, minimum height=0.8cm,text centered, draw=black, fill=mycolor1!20, inner sep=-0.3ex]
\tikzstyle{startstop_blue} = [rectangle, rounded corners, minimum width=1.5cm, minimum height=0.8cm,text centered, draw=black, fill=mycolor2!20, inner sep=-0.3ex]
\tikzstyle{startstop_green} = [rectangle, rounded corners, minimum width=1.5cm, minimum height=0.8cm,text centered, draw=black, fill=mycolor3!20, inner sep=-0.3ex]

\tikzstyle{decision} = [diamond, aspect=2,  text centered, draw=black, fill=mycolor1!20, inner sep=-0.3ex]
\tikzstyle{arrow} = [thick,->,>=stealth]

\begin{tikzpicture}[cross/.style={path picture={ 
  \draw[black]
(path picture bounding box.south east) -- (path picture bounding box.north west) (path picture bounding box.south west) -- (path picture bounding box.north east);
}}]



\pgfmathsetmacro{\xshift}{4.1}
\pgfmathsetmacro{\xshiftt}{7.1}
\pgfmathsetmacro{\yshiftt}{0.75}

\pgfmathsetmacro{\xshifttt}{10}
\pgfmathsetmacro{\yshifttt}{0.5}

\draw[arrow, color = mycolor2, line width = 1.2] (0,0) -- (1,0.2);
\draw[arrow, color = mycolor2, line width = 1.2] (1,0.2) -- (0.9,0.9);
\draw[arrow, color = mycolor2, line width = 1.2] (0.9,0.9) -- (0.1,0.3);
\node[mycolor2] at (1, 0) {\scalebox{0.9}{$z_{1}$}};
\node[mycolor2] at (1.12, 0.85) {\scalebox{0.9}{$z_{2}$}};
\node[mycolor2] at (0.14, 0.58) {\scalebox{0.9}{$z_{3}$}};

\draw[arrow, color = mycolor2, dashed, line width = 1.2] (\xshift+0,0) -- (\xshift-1.0136,0.1126);
\draw[arrow, color = mycolor2, line width = 1.2] (\xshift-1.0136,0.1126) -- (\xshift-1.1136,0.8126);
\draw[arrow, color = mycolor2, line width = 1.2] (\xshift-1.1136,0.8126) -- (\xshift-1.9136,0.2126);
\draw[arrow, color = mycolor1, dotted, line width = 1.2] (\xshift-1.0136,0.1126) -- (\xshift-1.9136,0.2126);
\node[mycolor2] at (\xshift-0.58, 0.38) {\scalebox{0.9}{$z_{1}e^{j\vartheta^{1}_{1}}$}};
\node[mycolor2] at (\xshift-0.89, 0.78) {\scalebox{0.9}{$z_{2}$}};
\node[mycolor2] at (\xshift-2, 0.4) {\scalebox{0.9}{$z_{3}$}};
\node[mycolor1] at (\xshift-1.8, -0.03) {\scalebox{0.9}{$z_{2}{+}z_{3}$}};

\draw[arrow, color = mycolor2, dashed, line width = 1.2] (\xshiftt+0,\yshiftt+0) -- (\xshiftt-0.6829,\yshiftt- 0.1835);
\draw[arrow, color = mycolor2, line width = 1.2] (\xshiftt-0.6829,\yshiftt- 0.1835) -- (\xshiftt-1.6964,\yshiftt- 0.0709);
\draw[arrow, color = mycolor2, line width = 1.2] (\xshiftt-1.6964,\yshiftt- 0.0709) -- (\xshiftt-2.4964,\yshiftt- 0.6709);
\draw[arrow, color = mycolor1, dotted, line width = 1.2] (\xshiftt-0.6829,\yshiftt- 0.1835) -- (\xshiftt-2.4964,\yshiftt- 0.6709);
\node[mycolor2] at (\xshiftt-1.4, \yshiftt+0.18) {\scalebox{0.9}{$z_{1}e^{j\vartheta^{1}_{1}}$}};
\node[mycolor2] at (\xshiftt-0.42, \yshiftt+0.10) {\scalebox{0.9}{$z_{2}e^{j\vartheta^{1}_{2}}$}};
\node[mycolor2] at (\xshiftt-2.5, \yshiftt-0.39) {\scalebox{0.9}{$z_{3}$}};
\node[mycolor1] at (\xshiftt-2, \yshiftt-0.9) {\scalebox{0.9}{$z_{1}e^{j\vartheta^{1}_{1}}{+}z_{3}$}};

\draw[arrow, color = mycolor2, line width = 1.2] (\xshifttt+0,\yshifttt+0) -- (\xshifttt-0.6829,\yshifttt- 0.1835);
\draw[arrow, color = mycolor2, line width = 1.2] (\xshifttt-0.6829,\yshifttt- 0.1835) -- (\xshifttt-1.6964,\yshifttt- 0.0709);
\draw[arrow, color = mycolor2, dashed, line width = 1.2] (\xshifttt-1.6964,\yshifttt- 0.0709) -- (\xshifttt-2.6956,\yshifttt- 0.1127);
\draw[arrow, color = mycolor1, dotted, line width = 1.2] (\xshifttt+0,\yshifttt+0) -- (\xshifttt-1.6964,\yshifttt- 0.0709);
\node[mycolor2] at (\xshifttt-1.4, \yshifttt-0.36) {\scalebox{0.9}{$z_{1}e^{j\vartheta^{1}_{1}}$}};
\node[mycolor2] at (\xshifttt-0.42, \yshifttt-0.30) {\scalebox{0.9}{$z_{2}e^{j\vartheta^{1}_{2}}$}};
\node[mycolor2] at (\xshifttt-2.52, \yshifttt-0.32) {\scalebox{0.9}{$z_{3}e^{j\vartheta^{1}_{3}}$}};
\node[mycolor1] at (\xshifttt-1.3, \yshifttt+0.22) {\scalebox{0.9}{$z_{1}e^{j\vartheta^{1}_{1}}{+}z_{2}e^{j\vartheta^{1}_{2}}$}};


\pgfmathsetmacro{\xshiftttt}{3}
\pgfmathsetmacro{\yshiftttt}{-3}

\draw[arrow, color = mycolor2, dashed, line width = 1.2] (\xshiftttt+0,\yshiftttt+0) -- (\xshiftttt-1.0108,\yshiftttt - 0.1354);
\draw[arrow, color = mycolor2, line width = 1.2] (\xshiftttt-1.0108,\yshiftttt - 0.1354) -- (\xshiftttt-1.6937,\yshiftttt- 0.3189);
\draw[arrow, color = mycolor2, line width = 1.2] (\xshiftttt-1.6937,\yshiftttt- 0.3189) -- (\xshiftttt-2.6928,\yshiftttt- 0.3606);
\draw[arrow, color = mycolor1, dotted, line width = 1.2] (\xshiftttt-1.0108,\yshiftttt - 0.1354) -- (\xshiftttt-2.6928,\yshiftttt- 0.3606);
\node[mycolor2] at (\xshiftttt-0.72, \yshiftttt+0.15) {\scalebox{0.9}{$z_{1}e^{j\vartheta^{2}_{1}}$}};
\node[mycolor2] at (\xshiftttt-1.4, \yshiftttt-0.5) {\scalebox{0.9}{$z_{2}e^{j\vartheta^{1}_{2}}$}};
\node[mycolor2] at (\xshiftttt-2.52, \yshiftttt-0.6) {\scalebox{0.9}{$z_{3}e^{j\vartheta^{1}_{3}}$}};
\node[mycolor1] at (\xshiftttt-2.25, \yshiftttt-0.01) {\scalebox{0.9}{$z_{2}e^{j\vartheta^{1}_{2}}{+}z_{3}e^{j\vartheta^{1}_{3}}$}};

\pgfmathsetmacro{\xshifttttt}{6.5}
\pgfmathsetmacro{\yshifttttt}{-3.2}

\draw[arrow, color = mycolor2, dashed, line width = 1.2] (\xshifttttt+0,\yshifttttt+0) -- (\xshifttttt-0.7044,\yshifttttt - 0.0621);
\draw[arrow, color = mycolor2, line width = 1.2] (\xshifttttt-0.7044,\yshifttttt - 0.0621) -- (\xshifttttt-1.7152,\yshifttttt- 0.1975);
\draw[arrow, color = mycolor2, line width = 1.2] (\xshifttttt-1.7152,\yshifttttt- 0.1975) -- (\xshifttttt-2.7143,\yshifttttt- 0.2392);
\draw[arrow, color = mycolor1, dotted, line width = 1.2] (\xshifttttt-0.7044,\yshifttttt - 0.0621) -- (\xshifttttt-2.7143,\yshifttttt- 0.2392);
\node[mycolor2] at (\xshifttttt-1.4, \yshifttttt-0.4) {\scalebox{0.9}{$z_{1}e^{j\vartheta^{2}_{1}}$}};
\node[mycolor2] at (\xshifttttt-0.42, \yshifttttt+0.22) {\scalebox{0.9}{$z_{2}e^{j\vartheta^{2}_{2}}$}};
\node[mycolor2] at (\xshifttttt-2.52, \yshifttttt-0.5) {\scalebox{0.9}{$z_{3}e^{j\vartheta^{1}_{3}}$}};
\node[mycolor1] at (\xshifttttt-2.1, \yshifttttt+0.15) {\scalebox{0.9}{$z_{1}e^{j\vartheta^{2}_{1}}{+}z_{3}e^{j\vartheta^{1}_{3}}$}};

\pgfmathsetmacro{\xshiftttttt}{10}
\pgfmathsetmacro{\yshiftttttt}{-3.1}

\draw[arrow, color = mycolor2, line width = 1.2] (\xshiftttttt+0,\yshiftttttt+0) -- (\xshiftttttt-0.7044,\yshiftttttt - 0.0621);
\draw[arrow, color = mycolor2, line width = 1.2] (\xshiftttttt-0.7044,\yshiftttttt - 0.0621) -- (\xshiftttttt-1.7152,\yshiftttttt- 0.1975);
\draw[arrow, color = mycolor2, dashed, line width = 1.2] (\xshiftttttt-1.7152,\yshiftttttt- 0.1975) -- (\xshiftttttt - 2.7086,\yshiftttttt- 0.3118);
\draw[arrow, color = mycolor1, dotted, line width = 1.2] (\xshiftttttt+0,\yshiftttttt+0) -- (\xshiftttttt-1.7152,\yshiftttttt- 0.1975);
\node[mycolor2] at (\xshiftttttt-1.4, \yshiftttttt+0.15) {\scalebox{0.9}{$z_{1}e^{j\vartheta^{2}_{1}}$}};
\node[mycolor2] at (\xshiftttttt-0.42, \yshiftttttt+0.24) {\scalebox{0.9}{$z_{2}e^{j\vartheta^{2}_{2}}$}};
\node[mycolor2] at (\xshiftttttt-2.52, \yshiftttttt+0.05) {\scalebox{0.9}{$z_{3}e^{j\vartheta^{2}_{3}}$}};
\node[mycolor1] at (\xshiftttttt-1.2, \yshiftttttt-0.5) {\scalebox{0.9}{$z_{1}e^{j\vartheta^{2}_{1}}{+}z_{2}e^{j\vartheta^{2}_{2}}$}};

\node[black] at (0.6, -1) {\scalebox{0.9}{Initial Vectors}};
\node[black] at (3.1, -1) {\scalebox{0.9}{Iter 1: Step 1}};
\node[black] at (5.8, -1) {\scalebox{0.9}{Iter 1: Step 2}};
\node[black] at (8.6, -1) {\scalebox{0.9}{Iter 1: Step 3}};

\node[black] at (1.7, -4.6) {\scalebox{0.9}{Iter 2: Step 1}};
\node[black] at (5.2, -4.6) {\scalebox{0.9}{Iter 2: Step 2}};
\node[black] at (8.6, -4.6) {\scalebox{0.9}{Iter 2: Step 3}};

\end{tikzpicture}}
    \caption{Visualization of the proposed algorithm in different steps for a toy example with $N=3$.}
    \label{fig:Visualization_Alg}
\end{figure}

Figure \ref{fig:Visualization_Alg} presents a toy example that demonstrates the different steps of the proposed algorithm. The vector representation shows that initially, the vectors are misaligned, leading to a relatively small amplitude of their sum compared to a scenario where they are aligned in the same direction. The algorithm updates the phase of each vector to match with the sum of the others. As the algorithm progresses and reaches the end of the second iteration, the vectors become almost aligned in the same direction, resulting in a nearly maximum amplitude of their sum.

\begin{thm}\label{Theorem_3}
The proposed Algorithm \ref{Alg1} converges to the maximum value of the function $f{\left(\bm{\vartheta}\right)}=\left|\sum_{n=1}^{N} z_{n}e^{j\vartheta_{n}}\right|^{2}$ as $M\to\infty$. 
\end{thm}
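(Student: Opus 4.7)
The plan is to view Algorithm~\ref{Alg1} as a cyclic single-coordinate ascent on $f$, where each inner update performs an \emph{exact} one-dimensional maximization by virtue of Theorem~\ref{Theorem_2}. Write $\bm{\vartheta}^{(m,n)}$ for the phase vector after the $n$th inner update in the $m$th outer iteration. Since the update replaces $\vartheta_n$ by the value maximizing $f$ with all other coordinates fixed, the sequence $\{f(\bm{\vartheta}^{(m,n)})\}$ is monotone non-decreasing. Combined with the upper bound $\bigl(\sum_n |z_n|\bigr)^2$ from Lemma~\ref{Lemma_1}, monotone convergence delivers a limit $f^\infty \leq \bigl(\sum_n |z_n|\bigr)^2$.

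The next step is a compactness argument on the torus $[0,2\pi)^N$. Extract a diagonal subsequence along which $\bm{\vartheta}^{(m_k,n)} \to \bm{\vartheta}^{(\infty,n)}$ for every $n \in \{0,1,\dots,N\}$. The telescoping identity $\sum_{n=1}^N\bigl[f(\bm{\vartheta}^{(m,n)}) - f(\bm{\vartheta}^{(m,n-1)})\bigr] = f(\bm{\vartheta}^{(m+1,0)}) - f(\bm{\vartheta}^{(m,0)})$ tends to zero, so each per-coordinate gain vanishes in the limit. Continuity of $f$ together with continuity of the single-variable maximum value (a maximum of a continuous function over a compact interval) then forces $\bm{\vartheta}^{(\infty,0)}$ to be coordinate-wise optimal: for every $n$, the component $\vartheta_n^{(\infty,0)}$ already maximizes $f$ with all other components held fixed at their limit values.

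To finish, I translate coordinate-wise optimality into global optimality through the geometric picture of Lemma~\ref{Lemma_1}. Let $v_n := z_n e^{j\vartheta_n^{(\infty,0)}}$ and $S := \sum_n v_n$. Coordinate-wise optimality at coordinate $n$, applied to the two-term sum $(S - v_n) + z_n e^{j\vartheta}$, requires by Lemma~\ref{Lemma_1} that $v_n$ be phase-aligned with $S - v_n$, and hence with $S$ itself. Since this holds for every $n$, every nonzero $v_n$ shares the common argument $\arg(S)$, which is precisely the configuration Lemma~\ref{Lemma_1} identifies as the global maximizer; therefore $f^\infty = f(\bm{\vartheta}^{(\infty,0)}) = \bigl(\sum_n |z_n|\bigr)^2$.

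The principal obstacle I anticipate is the degenerate situation $S - v_n = 0$, in which the one-dimensional subproblem in $\vartheta_n$ becomes constant and the update formula $\arg\!\bigl(y_1 - y_3 + j(2y_2 - y_1 - y_3)\bigr)$ collapses to $\arg(0)$. This is presumably what the abstract's phrase ``under specific circumstances'' is guarding against. The cleanest remedy is either to impose a mild genericity assumption on the $z_n$ that precludes exact cancellation along the trajectory, or to adopt a tie-breaking convention (e.g., leave $\vartheta_n$ unchanged when $y_1 = y_2 = y_3$) that preserves monotonicity, after which the compactness and coordinate-wise-optimality steps proceed unchanged.
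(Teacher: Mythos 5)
Your proof is correct and follows the same skeleton as the paper's: exact one-dimensional maximization (Theorem~\ref{Theorem_2}) makes the sequence of objective values monotone non-decreasing, Lemma~\ref{Lemma_1} supplies the upper bound $\bigl(\sum_{n}|z_{n}|\bigr)^{2}$, monotone convergence gives a limit, and a fixed-point/alignment argument identifies that limit with the global maximum. Where you diverge is instructive on three counts. First, the paper simply writes $\bm{\vartheta}^{\star}\delequal\lim_{k\to\infty}\bm{\vartheta}^{k}$ without establishing that the phase iterates themselves converge (monotonicity of $f$ alone does not rule out oscillation of the phases); your compactness-plus-vanishing-per-coordinate-gain argument on the torus is the rigorous way to extract a limit point and is a genuine repair, not just a stylistic choice. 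Second, you replace the paper's auxiliary Lemma~\ref{lemma_2} (which solves the linear system $z_{m}=c_{m}u_{m}$ to show pairwise alignment) with the one-line observation that if $v_{n}$ is phase-aligned with $S-v_{n}$ then it is phase-aligned with their sum $S$, for every $n$; this is shorter and makes the geometry transparent. Third, you explicitly flag the degeneracy $w_{n}^{\star}=S-v_{n}=0$, where $\operatorname{Arg}(w_{n}^{\star})$ is undefined and the update formula collapses to $\arg(0)$ --- the paper's proof silently assumes this never happens, and your diagnosis that this is what the ``under specific circumstances'' hedge in the abstract is covering for is almost certainly right. The one step you should spell out more carefully is the passage from ``the per-coordinate gain at stage $n$ vanishes in the limit'' to ``the single limit point $\bm{\vartheta}^{(\infty,0)}$ is coordinate-wise optimal in \emph{every} coordinate'': this needs the observation that the applied increment at step $n$ equals the angle between $z_{n}e^{j\vartheta_{n}}$ and $w_{n}$, and that a vanishing gain $2|w_{n}||z_{n}|(1-\cos\delta)\to 0$ forces that angle to zero whenever $|w_{n}|$ stays bounded away from zero, so that all $N+1$ intermediate limit points within one sweep coincide. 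With that detail added (and the degenerate case excluded by assumption or by a tie-breaking rule, as you propose), your argument is complete and somewhat tighter than the published one.
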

\begin{proof}
The proof is provided in Appendix \ref{App_T3}.
\end{proof}

\subsection{A Random Algorithm}
We use a random algorithm as a benchmark for the proposed algorithm. The basic concept is that, at each step, the algorithm picks a single element of the \gls*{ris} sequentially and assigns a random phase value from a uniform distribution over the interval $[0, 2\pi)$. The new power measurement is then compared to the previous one. If the power increases, the algorithm updates the phase and moves on to the next element. If the measured power decreases, the algorithm maintains the previous phase and proceeds to the next element.



    

\section{Performance Evaluation}\label{sec:NumResults}
In this section, we evaluate the performance of our proposed algorithm and compare it to the random phase update method. We use a \gls*{ris} with $100$ elements and generate random complex Gaussian distributed values with unit variance for $\left\{z_n\right\}_{n=1}^N$. We conducted Monte-Carlo simulations and compared the results with those obtained from the random phase update method.

Fig. \ref{fig:Mean_vs_measurements} illustrates the normalized achieved power versus the number of power measurements for the proposed and random methods. The proposed method uses three measurements per phase update of each element and converges to its final value after $300$ measurements (i.e., $3N$), while the random one requires ten times more measurements as it has a slow convergence rate. Even in the presence of the noise, at the SNR of $10$\,dB the proposed algorithm reaches the $93$\% of the maximum achievable power \cite{MTavana_Tcom_2023}.

\begin{figure}
    \centering
    \input{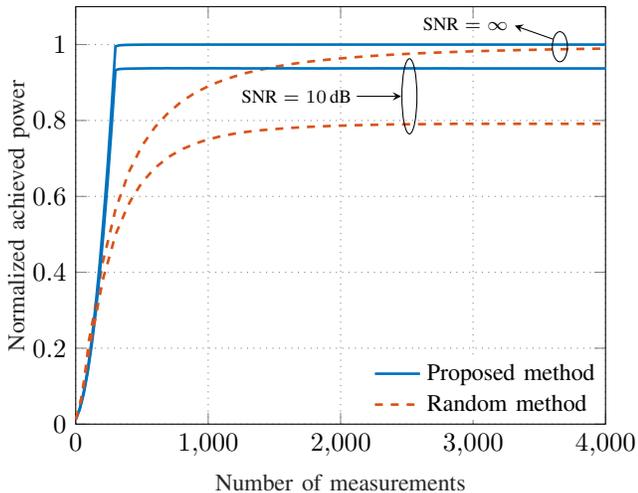}
    \caption{Normalized achieved power versus the number of measurements for the proposed and random methods.}
    \label{fig:Mean_vs_measurements}
\end{figure}

\begin{figure}
    \vspace{-2.25mm}
    \centering
    \input{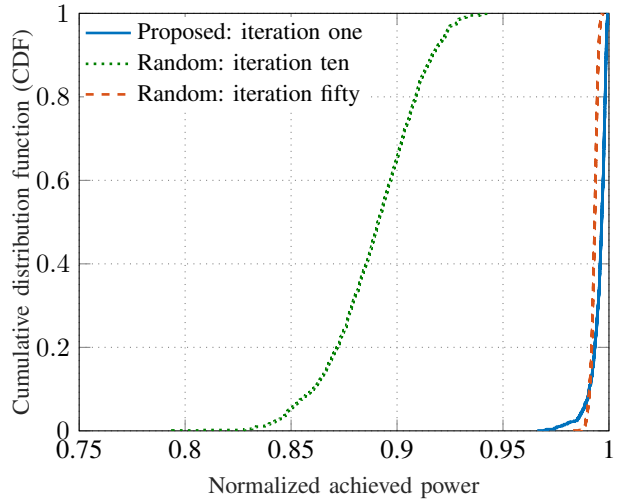}
    \caption{The CDF of the normalized achieved power for the proposed and random algorithms.}
    \label{fig:CDF_proposed}
\end{figure}

Fig. \ref{fig:CDF_proposed} shows the \gls*{cdf} of the relative achieved power for both the proposed and random algorithms, compared to the optimum. We conducted $1000$ simulations with randomly generated channels. The results indicate that the proposed algorithm after the first iteration performs significantly better than the random one after ten iterations. Although the proposed algorithm still performs slightly better than the random one after fifty iterations, the latter appears to be slightly more stable.

\section{Conclusions}\label{sec:Conclusion}
 This paper has presented a method for energy harvesting at a \gls*{ris} from an ambient \gls*{rf} source in the absence of coordination with the source. The objective was to maximize the received power by adjusting the phases of the \gls*{ris} elements based only on power measurements, obtained without having a \gls*{rf} receiver. The proposed sequential algorithm is proved to converge to the optimum and outperformed the random phase update method in terms of convergence rate.
 In future work, we will extend this study to  more general scenarios, including the case with discrete phase shifts.

\appendices
\section{Proof of Theorem \ref{Theorem_2}} \label{App_T2}
Let us define
\begin{equation}\label{eq:x_def_proof}
    \bm{x}\delequal \left[\left|z_{0}\right|^{2}+\left|z\right|^{2}, 2\,\operatorname{Re}\!{\left(z_{0}z^{*}\right)}, 2\,\operatorname{Im}\!{\left(z_{0}z^{*}\right)}\right]^{\mathsf{T}}.
\end{equation}
From Lemma \ref{Lemma_1} and \eqref{eq:x_def_proof}, we have 
\begin{align*}
     \vartheta^{\star}&=\arg{\left(z_{0}\right)}-\arg{\left(z\right)}\\
     &=\arg{\left(z_{0}z^{*}\right)}\\
     &=\arg{\left(\operatorname{Re}\!{\left(z_{0}z^{*}\right)}+j\operatorname{Im}\!{\left(z_{0}z^{*}\right)}\right)}\\
     &=\arg{\left(x_{2}+j x_{3}\right)}.
\end{align*}
Therefore, we can compute the optimal phase shift from $\bm{x}$. We show that one can compute $\bm{x}$ using the received power from three different measurements. By expanding the function $f{\left(\varphi_{l}\right)}$, we get
\begin{align}
    f{\left(\varphi_{l}\right)}&\!=\!\left|z_{0}\right|^{2}\!\!+\! \left|z\right|^{2}\!\!+\!2\operatorname{Re}\left(z_{0}z^{*}\right)\!\cos{\left(\varphi_{l}\right)}\!\!+\!2\operatorname{Im}\left(z_{0}z^{*}\right)\!\sin{\left(\varphi_{l}\right)}\nonumber\\
    &=\bm{a}^\mathsf{T}\bm{x},
\end{align}
where $\bm{a}_{l}\delequal\left[1, \cos{\left(\varphi_{l}\right)},  \sin{\left(\varphi_{l}\right)}\right]^{\mathsf{T}}$, for $1\leq l\leq 3$. 

Assuming $\bm{A}\delequal \left[\bm{a}_{1}, \bm{a}_{2}, \bm{a}_{3}\right]^\mathsf{T}$, and $\bm{y}=\left[f{\left(\varphi_{1}\right)}, f{\left(\varphi_{2}\right)}, f{\left(\varphi_{3}\right)}\right]^{\mathsf{T}}$, we have $\bm{A}\bm{x}=\bm{y}$, or $\bm{x}=\bm{A}^{-1}\bm{y}$ for $\det{\left(\bm{A}\right)}\neq 0$.

\section{Proof of Theorem \ref{Theorem_3}}\label{App_T3}
We denote the phase-shift vector generated by the algorithm at iteration $m$ up to the element $n$ with $\bm{\vartheta}^{k}$, where $k=mN+n$. We prove that for a given $N$, $\lim_{k\to\infty} f{\left(\bm{\vartheta}^{k}\right)}=\max_{\bm{\vartheta}}f{\left(\bm{\vartheta}\right)}$.
For any integer $1\leq n\leq N$ and an integer $k\geq 1$, we define 
\begin{equation}
    w_{n}^{k}\delequal \sum_{\substack{i=1\\i\neq n}}^{N} z_{i}e^{j\vartheta^{k}_{i}}.
\end{equation} 
Hence, we have $f{\left(\bm{\vartheta}^{k}\right)} = \left|w^{k}_{n}+z_{n}e^{j\vartheta^{k}_{n}}\right|^2$.
At the phase update $k+1$, the algorithm only updates the phase of the element $n$ ($w^{k+1}_{n}=w^{k}_{n}$) as $f{\left(\bm{\vartheta}^{k+1}\right)} = \left|w^{k}_{n}+z_{n}e^{j\vartheta^{k+1}_{n}}\right|^2$.
According to the Theorem \ref{Theorem_2}, we have
\begin{equation}
    f{\left(\bm{\vartheta}^{k+1}\right)} \!=\! \left|\!w^{k}_{n}\!+\!z_{n}e^{j\vartheta^{k+1}_{n}}\!\right|^2\!\geq\! \left|w^{k}_{n}\!+\!z_{n}e^{j\vartheta^{k}_{n}}\right|^2\!=\!f{\left(\bm{\vartheta}^{k}\right)}.
\end{equation}
Therefore, $f{\left(\bm{\vartheta}^{1}\right)}, f{\left(\bm{\vartheta}^{2}\right)}, \dots$ form an increasing sequence. Moreover, according to the Lemma \ref{Lemma_1}, the set $\mathcal{F}\delequal\left\{f{\left(\bm{\vartheta}^{k}\right)}; k\in\mathbb{N}\right\}$ is upper bounded by $\left(\sum_{n=1}^{N}\left|z_{n}\right|\right)^{2}$. Therefore, according to the monotone convergence theorem, we have
\begin{equation}
    \lim_{k\to\infty}f{\left(\bm{\vartheta}^{k}\right)} = \sup{\mathcal{F}}\leq \left(\sum_{n=1}^{N}\left|z_{n}\right|\right)^{2}.
\end{equation}
If we show that $\sup{\mathcal{F}}=\left(\sum_{n=1}^{N}\left|z_{n}\right|\right)^{2}$, the proof is complete. Lets define $\bm{\vartheta}^{\star}\delequal\lim_{k\to\infty}\bm{\vartheta}^{k}$. For the phase shift vector $\bm{\vartheta}^{\star}$, any further phase update will not increase $f{\left(\cdot\right)}$. In other words, for updating element $n$, we should apply Theorem \ref{Theorem_2} to the following function  
\begin{equation}
    f{\left(\bm{\vartheta}^{\star}\right)} = \left|w_{n}^{\star}+z_{n}e^{j\vartheta^{\star}_{n}}\right|^2,
\end{equation}
where $w_{n}^{\star}\delequal \sum_{\substack{i=1\\i\neq n}}^{N} z_{i}e^{j\vartheta^{\star}_{i}}$. Since any further update will not increase the value of $f{\left(\cdot\right)}$, therefore for all $1\leq n\leq N$, we have $\operatorname{Arg}{\left(z_{n}e^{j\vartheta^{\star}_{n}}\right)}=\operatorname{Arg}{\left(w_{n}^{\star}\right)}$. Using Lemma \ref{lemma_2}, we have 
\begin{align}
    \operatorname{Arg}{\left(z_{1}e^{j\vartheta^{\star}_{1}}\right)}=\operatorname{Arg}{\left(z_{2}e^{j\vartheta^{\star}_{2}}\right)}=\dots = \operatorname{Arg}{\left(z_{N}e^{j\vartheta^{\star}_{N}}\right)},
\end{align}
or 
\begin{align}
    \operatorname{Arg}{\left(z_{1}\right)}+\vartheta^{\star}_{1}=\dots= \operatorname{Arg}{\left(z_{N}\right)}+\vartheta^{\star}_{N} = \vartheta_{0} \!\mod 2\pi.
\end{align}
Therefore, we have $\vartheta_{n}^{\star}=\vartheta_{0}-\arg{\left(z_{n}\right)}, \ \text{for all}\ 1\leq n\leq N$,
that are according to the Lemma \ref{Lemma_1}, the phase shifts that maximize $f{\left(\cdot\right)}$.

\begin{lem} \label{lemma_2}
    Assume for each $1\leq n\leq N$, $u_{n}\delequal \sum_{\substack{i=1\\i\neq n}}^{N}z_{i}$, if 
    $\operatorname{Arg}{\left(z_{n}\right)}=\operatorname{Arg}{\left(u_{n}\right)}$ for all $1\leq n\leq N$, then
    $\operatorname{Arg}{\left(z_{1}\right)}=\operatorname{Arg}{\left(z_{2}\right)}=\dots=\operatorname{Arg}{\left(z_{N}\right)}$.
\end{lem}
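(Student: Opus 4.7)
\textbf{Proof proposal for Lemma \ref{lemma_2}.} The plan is to rewrite the hypothesis so that each $z_n$ is forced to be a positive real multiple of a single common complex number, namely the total sum $S \delequal \sum_{i=1}^{N} z_i$, and then read off that all arguments must coincide.

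First, I would observe the simple identity $u_n = S - z_n$ for every $1 \leq n \leq N$. The hypothesis $\operatorname{Arg}(z_n) = \operatorname{Arg}(u_n)$ then says that $z_n$ and $S - z_n$ have the same argument, which means one is a nonnegative real multiple of the other. Excluding the degenerate case $z_n = 0$ (which can be treated separately, since a zero term may be reassigned any argument without affecting the other $u_i$), there exists $\lambda_n \geq 0$ with
\begin{equation}
    S - z_n = \lambda_n z_n, \qquad \text{so} \qquad S = (1+\lambda_n)\, z_n.
\end{equation}
Since $1+\lambda_n > 0$ is a strictly positive real, this forces $\operatorname{Arg}(z_n) = \operatorname{Arg}(S)$ for every $n$, and therefore all arguments $\operatorname{Arg}(z_1), \dots, \operatorname{Arg}(z_N)$ coincide.

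The only point requiring care is the possibility $S = 0$. If $S = 0$, then $u_n = -z_n$, and $\operatorname{Arg}(-z_n) = \operatorname{Arg}(z_n)$ is impossible for any $z_n \neq 0$, so the hypothesis cannot hold unless every $z_n$ vanishes; in that trivial case the conclusion is vacuous. I expect this edge case to be the only subtlety; in the context in which Lemma \ref{lemma_2} is invoked inside the proof of Theorem \ref{Theorem_3}, the role of $z_n$ is played by $z_n e^{j\vartheta_n^\star}$, whose moduli are the original $|z_n|$, and $S$ corresponds to $\sum_n z_n e^{j\vartheta_n^\star}$, which is nonzero unless the optimum itself is zero (a degenerate channel). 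The main bookkeeping step is therefore just the algebraic rearrangement $S = (1+\lambda_n) z_n$, after which the conclusion is immediate.
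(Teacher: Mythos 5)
Your proof is correct, and it takes a cleaner route than the paper's. The paper argues pairwise: for two indices $m \neq n$ it writes $u_{m} = z_{n} + u_{m,n}$ and $u_{n} = z_{m} + u_{m,n}$ with $u_{m,n}$ the sum over the remaining indices, uses the hypothesis to get $z_{m} = c_{m}u_{m}$ and $z_{n} = c_{n}u_{n}$ for positive reals $c_{m}, c_{n}$, and then eliminates $u_{m,n}$ to obtain $z_{m} = \frac{c_{m}(1+c_{n})}{c_{n}(1+c_{m})}\,z_{n}$, whose coefficient is a positive real. You instead anchor everything to the single global quantity $S = \sum_{i} z_{i}$, note $u_{n} = S - z_{n}$, and conclude $S = (1+\lambda_{n})z_{n}$ so that every $z_{n}$ shares the argument of $S$ simultaneously. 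Your version buys a one-line argument that treats all $N$ indices at once and makes the common argument explicit (it is $\operatorname{Arg}(S)$), whereas the paper's pairwise elimination needs the extra bookkeeping of $u_{m,n}$ and only yields equality of arguments two indices at a time. You are also more careful than the paper about the degenerate cases $z_{n}=0$ and $S=0$, which the paper's proof silently assumes away (its $c_{n}$ would be undefined if some $u_{n}=0$); your observation that $S=0$ with some $z_{n}\neq 0$ contradicts the hypothesis is exactly the right way to dispose of that case.
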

\begin{proof}
    For $1\leq m, n\leq N$ and $m\neq n$, we have $\operatorname{Arg}{(z_{m})}=\operatorname{Arg}{\left(u_{m}\right)}$ and $\operatorname{Arg}{\left(z_{n}\right)}=\operatorname{Arg}{\left(u_{n}\right)}$, therefore, for some real positive $c_{m}$ and $c_{n}$, we have $z_{m}=c_{m}u_{m}$ and $z_{n}=c_{n}u_{n}$ .
    Assuming $u_{m,n}\delequal \sum_{\substack{i=1\\i\neq m,n}}^{N}z_{k}$, we have
    \begin{align}
        z_{m}&=c_{m}\left(z_{n}+u_{m,n}\right)\label{eq:lemma_a}\\
        z_{n}&=c_{n}\left(z_{m}+u_{m,n}\right)\label{eq:lemma_b}.
    \end{align}
    After some algebraic manipulations, we obtain
    \begin{equation}
        z_{m}=\cfrac{c_{m}\left(1+c_{n}\right)}{c_{n}\left(1+c_{m}\right)}z_{n}.
    \end{equation}
    Hence, $\operatorname{Arg}{\left(z_{m}\right)}=\operatorname{Arg}{\left(z_{n}\right)}$ and the proof is complete.
\end{proof}

\bibliographystyle{IEEEtran}
\bibliography{IEEEabrv,ref}

\begin{thebibliography}{10}
\providecommand{\url}[1]{#1}
\csname url@samestyle\endcsname
\providecommand{\newblock}{\relax}
\providecommand{\bibinfo}[2]{#2}
\providecommand{\BIBentrySTDinterwordspacing}{\spaceskip=0pt\relax}
\providecommand{\BIBentryALTinterwordstretchfactor}{4}
\providecommand{\BIBentryALTinterwordspacing}{\spaceskip=\fontdimen2\font plus
\BIBentryALTinterwordstretchfactor\fontdimen3\font minus
  \fontdimen4\font\relax}
\providecommand{\BIBforeignlanguage}[2]{{%
\expandafter\ifx\csname l@#1\endcsname\relax
\typeout{** WARNING: IEEEtran.bst: No hyphenation pattern has been}%
\typeout{** loaded for the language `#1'. Using the pattern for}%
\typeout{** the default language instead.}%
\else
\language=\csname l@#1\endcsname
\fi
#2}}
\providecommand{\BIBdecl}{\relax}
\BIBdecl

\bibitem{MTavana_IoTJ_2021}
M.~Tavana, M.~Ozger \emph{et~al.}, ``Wireless power transfer for aircraft {IoT}
  applications: System design and measurements,'' \emph{IEEE Internet Things
  J.}, vol.~8, no.~15, pp. 11\,834--11\,846, 2021.

\bibitem{MTavana_2022_ICC}
M.~Tavana, E.~Bj\"ornson, and J.~Zander, ``Range limits of energy harvesting
  from a base station for battery-less {Internet}-of-things devices,'' in
  \emph{IEEE International Conference on Communications (ICC)}, 2022.

\bibitem{Mtavana_VTCW_2022}
M.~Tavana, E.~Björnson, and J.~Zander, ``Multi-site energy harvesting for
  battery-less internet-of-things devices: {Prospects} and limits,'' in
  \emph{IEEE 96th Vehicular Technology Conference (VTC2022-Fall)}, 2022, pp.
  1--6.

\bibitem{goldsmith2005wireless}
A.~Goldsmith, \emph{Wireless Communications}.\hskip 1em plus 0.5em minus
  0.4em\relax Cambridge University Press, 2005.

\bibitem{9140329}
M.~Di~Renzo, A.~Zappone \emph{et~al.}, ``Smart radio environments empowered by
  reconfigurable intelligent surfaces: How it works, state of research, and the
  road ahead,'' \emph{IEEE J.\ Sel.\ Areas Commun.}, vol.~38, no.~11, pp.
  2450--2525, 2020.

\bibitem{9048622}
W.~Tang, M.~Z. Chen \emph{et~al.}, ``Wireless communications with programmable
  metasurface: New paradigms, opportunities, and challenges on transceiver
  design,'' \emph{IEEE Wireless Communications}, vol.~27, no.~2, pp. 180--187,
  2020.

\bibitem{6206517}
L.~Subrt and P.~Pechac, ``Controlling propagation environments using
  intelligent walls,'' in \emph{European Conference on Antennas and Propagation
  (EUCAP)}, 2012, pp. 1--5.

\bibitem{9721205}
E.~Björnson, H.~Wymeersch \emph{et~al.}, ``Reconfigurable intelligent
  surfaces: A signal processing perspective with wireless applications,''
  \emph{IEEE Signal Processing\ Mag.}, vol.~39, no.~2, pp. 135--158, 2022.

\bibitem{9206044}
W.~Tang, M.~Z. Chen \emph{et~al.}, ``Wireless communications with
  reconfigurable intelligent surface: Path loss modeling and experimental
  measurement,'' \emph{IEEE Trans.\ Wireless Commun.}, vol.~20, no.~1, pp.
  421--439, 2021.

\bibitem{wu2021intelligent}
Q.~Wu, S.~Zhang \emph{et~al.}, ``Intelligent reflecting surface-aided wireless
  communications: A tutorial,'' \emph{IEEE Trans.\ Commun.}, vol.~69, no.~5,
  pp. 3313--3351, 2021.

\bibitem{ntontin2022wireless}
K.~Ntontin, A.~A. Boulogeorgos \emph{et~al.}, ``Wireless energy harvesting for
  autonomous reconfigurable intelligent surfaces,'' \emph{IEEE Trans.\ Green
  Commun.\ Netw.}, vol.~7, no.~1, pp. 114--129, 2023.

\bibitem{8741198}
C.~Huang, A.~Zappone \emph{et~al.}, ``Reconfigurable intelligent surfaces for
  energy efficiency in wireless communication,'' \emph{IEEE Trans.\ Wireless
  Commun.}, vol.~18, no.~8, pp. 4157--4170, 2019.

\bibitem{7370753}
R.~Méndez-Rial, C.~Rusu \emph{et~al.}, ``Hybrid {MIMO} architectures for
  millimeter wave communications: Phase shifters or switches?'' \emph{IEEE
  Access}, vol.~4, pp. 247--267, 2016.

\bibitem{9214497}
B.~Lyu, P.~Ramezani \emph{et~al.}, ``Optimized energy and information relaying
  in self-sustainable {IRS}-empowered {WPCN},'' \emph{IEEE Trans.\ Commun.},
  vol.~69, no.~1, pp. 619--633, 2021.

\bibitem{9895266}
Y.~Cheng, W.~Peng, and T.~Jiang, ``Self-sustainable {RIS} aided wireless power
  transfer scheme,'' \emph{IEEE Trans.\ Veh.\ Technol.}, vol.~72, no.~1, pp.
  881--892, 2023.

\bibitem{black2017holographic}
E.~J. Black, ``Holographic beam forming and {MIMO},'' Pivotal Commware, Tech.
  Rep., 2017.

\bibitem{haykin2008communication}
S.~Haykin, \emph{Communication {Systems}}.\hskip 1em plus 0.5em minus
  0.4em\relax John Wiley \& Sons, 2008.

\bibitem{MTavana_Tcom_2023}
M.~Tavana, M.~Masoudi, and E.~Björnson, ``Energy harvesting maximization for
  reconfigurable intelligent surfaces using amplitude measurements,''
  \emph{{Submitted to IEEE Trans. Commun.}}, 2023.

\end{thebibliography}
\end{document}